\theoremstyle{plain}
\newtheorem{theorem}{Theorem}
\newtheorem{lemma}{Lemma}
\theoremstyle{definition}
\newcommand{\mathcalE}{\M}
\newcommand{\Hc}{H^\prime_{\!\bar{R}}}
\newcommand{\M}{M}
\newcommand{\tmax}{t}
\definecolor{darkgreen}{rgb}{0,.6,0}
\begin{document}
\newcommand{\Int}{\mathbb{Z}}
\newcommand{\X}{X}
\newcommand{\ket}[1]{|#1\rangle}
\newcommand{\braket}[2]{\langle#1|#2\rangle}
\newcommand{\norm}[1]{\lVert #1\rVert}
\newcommand{\abs}[1]{| #1|}
\newcommand{\bra}[1]{\langle#1|}
\newcommand{\myT}[1]{\mathcal{V}_{U,\m}^{\X}(#1)}
\newcommand{\E}{\mathbb{E}}
\newcommand{\Pp}{\mathcal{P}}
\newcommand{\Tt}{\mathcal{T}}
\newcommand{\ceil}[1]{\left\lceil #1 \right\rceil}
\newcommand{\floor}[1]{\left\lfloor #1 \right\rfloor}
\providecommand{\Aa}{\mathcal{A}}
\providecommand{\Qq}{\mathcal{Q}}
\providecommand{\Rr}{\mathsf{R}}
\providecommand{\Comp}{\mathbb{C}}
\providecommand{\Bb}{\mathcal{B}}
\providecommand{\Gin}{G^{\textrm{in}}}
\providecommand{\Vin}{V^{\textrm{in}}}
\providecommand{\Vout}{V^{\textrm{out}}}
\providecommand{\Ein}{E^{\textrm{in}}}
\providecommand{\Einter}{E^{\textrm{inter}}}
\providecommand{\degin}{\textrm{deg}_{\Ein}}
\providecommand{\degout}{\textrm{deg}_{\Eout}}
\providecommand{\Eout}{E^{\textrm{out}}}
\providecommand{\Enew}{E^{\textrm{new}}}
\providecommand{\Tr}{\mathrm{Tr}}
\newcommand{\poly}{\mathrm{poly}}
\newcommand{\mix}{\mathrm{mix}}
\newcommand{\diam}{\mathrm{diam}}
\newcommand{\ID}{\mathrm{ID}}
\newcommand{\mym}{\bar{m}}
\newcommand{\myn}{\bar{n}}
\newcommand{\set}[2]{\{#1,\ldots,#2\}}
\providecommand{\Nn}{\mathcal{N}}
\providecommand{\Vv}{\mathcal{V}}
\providecommand{\Ii}{\mathcal{I}}
\providecommand{\Cc}{\mathcal{C}}
\providecommand{\Dd}{\mathcal{D}}
\newcommand{\Hline}[1]{\noalign{\hrule height #1}} 

\author{Harry Buhrman}
\email{hbuhrman@gmail.com}
\affiliation{Quantinuum, Terrington House 13-15 Hills Road Cambridge CB2 1NL, United Kingdom}
\affiliation{QuSoft, Science Park 123,
1098 XG Amsterdam, The Netherlands}
\author{Sevag Gharibian}
\email{sevag.gharibian@upb.de}
\affiliation{Paderborn University, Department of Computer Science and Institute for Photonic Quantum Systems (PhoQS), Warburger Strasse 100, 33098 Paderborn,
Germany}
\author{Zeph Landau}
\email{zeph.landau@gmail.com}
\affiliation{University of California, Berkeley, Department of Computer Science, Berkeley, CA 94706}
\author{Fran{\c c}ois Le Gall
}
\email{legall@math.nagoya-u.ac.jp}
\affiliation{Nagoya University, Graduate School of Mathematics, Furocho, Chikusaku, 464-8602 Nagoya, Japan}
\author{Norbert Schuch}
\email{norbert.schuch@univie.ac.at}
\affiliation{University of Vienna, Faculty of Mathematics, Oskar-Morgenstern-Platz 1, 1090 Wien, Austria}
\affiliation{University of Vienna, Faculty of Physics, Boltzmanngasse 5, 1090 Wien, Austria}
\author{Suguru Tamaki}
\email{suguru.tamaki@gmail.com}
\affiliation{\mbox{University of Hyogo, Graduate School of Information Science, 8-2-1 Gakuennishi-machi, Nishi-ku, 651-2197 Kobe, Japan}}

\title{Beating the natural Grover bound for low-energy estimation and state preparation}

\begin{abstract}
    Estimating ground state energies of many-body Hamiltonians is a central task
    in many areas of quantum physics.  In this work, we give quantum algorithms
    which, given any $k$-body Hamiltonian $H$, compute an estimate for the
    ground state energy and prepare a  quantum state achieving said
    energy, respectively. Specifically, for any $\varepsilon>0$, our algorithms return, with
    high probability, an estimate of the ground state energy of $H$ within
    additive error $\varepsilon \mathcalE$, or a quantum state with
    the corresponding energy. Here,  $\M$ is the total strength of
    all interaction terms, which in general is extensive in the system size.
    Our approach makes no assumptions about the geometry or spatial locality of
    interaction terms of the input Hamiltonian and thus handles 
    even long-range or all-to-all interactions, such as in quantum chemistry, where
    lattice-based techniques break down. In this fully general setting, the runtime of our
    algorithms scales as $2^{cn/2}$ for $c<1$, yielding the first quantum
    algorithms for low-energy estimation breaking a standard square root Grover speedup for unstructured search.
    The core of
    our approach is remarkably simple, and relies on showing that 
    an extensive fraction of the
    interactions can be neglected with a controlled error. What this ultimately implies is that even arbitrary $k$-local Hamiltonians have structure in their low energy space, in the form of an exponential-dimensional low energy subspace.
\end{abstract}

\maketitle

Determining the properties of a quantum system at low energies is one of
the central problems in the study of complex quantum many-body systems. 
It is most challenging in settings with long-range or all-to-all
interactions, encountered for instance in the study of molecules
in quantum chemistry~\cite{szabo:quantum-chemistry-book,helgaker:quantum-chemistry-book},
but also in a variety of cold atomic and molecular as well as solid state
systems~\cite{ruderman:rkky, saffman:rydberg-review, islam:iontrap-longrage,
blatt:iontrap-qsim-review, yan:exchange-interact-polar-molecules,
otten:longrangeint-scondarray, gopalakrishnan:cavity-atoms-lr-interactions}.
At the same time, all-to-all interactions also appear naturally in quantum
complexity theory when encoding computationally hard tasks into quantum
Hamiltonians~\cite{kitaevClassicalQuantumComputation2002,Kempe+06}, and would necessarily~\cite{bansal+2009} have to be present in the sought-for quantum
PCP theorem~\cite{aharonovGuestColumnQuantum2013}.

The ability of both classical and quantum algorithms to compute low-energy
properties of quantum many-body systems is fundamentally limited by
quantum complexity theory, which poses restrictions on what generally
applicable algorithms can provably achieve.  Finding the ground state
energy of a system of $n$ spins up to precision $1/\mathrm{poly}(n)$ is
known to be QMA-hard (where QMA is the quantum version of NP)~\cite{kitaevClassicalQuantumComputation2002,Kempe+06,oliveirarobertoComplexityQuantumSpin2008,cubitt2016,piddockComplexityAntiferromagneticInteractions2017},
and thus hard even for
quantum computers. 

To make the problem more tractable, one can aim for some given
\emph{extensive accuracy} in terms of the total energy scale. If the total
strength of the interactions scales like the system size, as expected in
physical systems, this amounts to a constant (but arbitrarily small)
energy density,  and thus allows to study properties of the system
which are stable at finite temperature. For systems with spatially local
interactions, e.g., on a lattice, this problem can be solved efficiently
(for a fixed precision) by
cutting the system into constant-sized patches and neglecting the couplings
between those patches~\cite{bansal+2009}. However, this is no longer possible for sytems with
all-to-all interactions. In fact, the problem must remain at least
NP-hard, as the PCP theorem implies that constraint satisfaction problems
(which map to Hamiltonians with long-range couplings) remain NP-hard even
for determining whether a fixed fraction of terms is violated~\cite{PCP1,PCP2}; the
validity of a potential quantum PCP theorem~\cite{aharonovGuestColumnQuantum2013} 
would even imply its QMA-hardness.

In the light of these hardness results and the unstructured nature of the
problem, the best runtime we can expect for a classical algorithm for
estimating properties at constant energy density is exponential, by exact diagonalization methods such as the Lanczos algorithm \cite{Lanczos50,KW92}.
In fact, the Strong Exponential Time Hypothesis  states that classical algorithms \emph{must} run essentially in time scaling with $2^n$~\cite{impagliazzo2001}.
On a quantum computer, we can use Grover search, leading to a
quadratic speedup and thus a runtime of $O(2^{n/2})$ for qubit systems~\cite{grover1996,Poulin+09,Apeldoorn+20,Kerzner+23}. 
These methods can be improved if we are given access to a \emph{trial} or
\emph{guiding} state for the
low-energy subspace, as in recent quantum algorithms for ground state
energy estimation~\cite{Ge+19,Lin+20}; however, the computation of such
guiding states is a serious bottleneck, e.g., in current quantum chemistry
applications~\cite{lee+2023}. 

In this paper, we provide a quantum algorithm which provides a
super-quadratic speedup for computing properties of a general Hamiltonian
with all-to-all interactions and $k$-body terms at fixed energy density.
More specifically, we show that given a Hamiltonian $H=\sum h_\alpha$
on $(\mathbb C^d)^{\otimes n}$
with arbitrary $k$-body terms, but with no
restriction on the geometry of the interactions, there exists a quantum
algorithm with runtime 
$O(d^{(1-c_\varepsilon)\,n/2}/\varepsilon)$,
with $c_\varepsilon = \tfrac{\varepsilon}{k+\varepsilon}>0$,
which returns an estimate of the ground state energy
energy up to any extensive error $\varepsilon\, \sum\|h_\alpha\|$, as well as
a quantum algorithm which outputs a state within that energy range, from which
subsequently properties stable at low-energy density can be computed.  At
the same time, our results also yield classical algorithms for the same
task with quadratically larger runtime, which therefore in turn outperform
exact diagonalization. 

The key property which we prove and on
which our results are built is that even for systems with non-local
interactions, an extensive fraction of the
interactions can be neglected with a controlled error, just as for
spatially local systems.  This in particular also implies that the number
of states below a certain energy grows exponentially already at low
energies, again just as for
spatially local Hamiltonians---differently speaking, even all-to-all
interactions are limited in the extent to which they can rule out
low-energy states.  This allows us to construct a non-trivial guiding
state or, alternatively, to restrict the search space, and this way gives
rise to algorithms beating the natural bound based on Grover search.

\emph{Setting and summary of results.---}%
Let us now summarize our results 
in more detail. We consider $n$ $d$-level systems $(\mathbb C^d)^{\otimes n}$ which interact via a
Hamiltonian 
\begin{equation}
H=\sum_{\alpha=1}^m h_\alpha\ ,
\end{equation}
where each Hamiltonian term $h_\alpha$ is a $k$-body term, i.e., it acts on at
most $k$ of those systems, but it does not obey any spatial (geometric)
locality (hence, $m\le n^k$).  We will adopt the terminology ``$k$-local'' for such
interactions in the following, as is standard in quantum complexity theory. 
For simplicity, we will restrict to the case of qubits, $d=2$.
Define $\M := \sum_{\alpha=1}^m \|h_\alpha\|$.
Let $E_0(H)$ denote the ground state energy of $H$. We will be interested
in the task of determining $E_0(H)$ up to multiplicative accuracy
$\varepsilon$, i.e., to find an $\hat E$ with 
\begin{equation}
\label{eq:extensive-energy-error}
\vert \hat E-E_0(H)\vert \le \varepsilon\mathcalE\ ;
\end{equation}
 or to prepare a state with energy in that range. Note that for 
physical systems with geometrically local or sufficiently rapidly
decaying and uniform interactions, $\mathcalE$ will be bounded by a constant times
the system size $n$, and thus,
\eqref{eq:extensive-energy-error} amounts to an extensive accuracy, while
for a setting with $\|h_\alpha\|\le 1$, as natural for complexity
theoretic scenarios, $\mathcalE \le m$. In the following, the only assumption
on $\M$ which we will make is that it is upper bounded by a polynomial in
$n$, which in particular is satisfied in the aforementioned scenarios. 

Our first result is the construction of a new Hamiltonian $H'$ which acts
on a smaller number $n'$ of spins, and whose ground state energy $E_0(H')$
satisfies that $\vert E_0(H)-E_0(H')\vert \le \delta\mathcalE$, where
$\delta$ can be tuned by choosing $n'$; specifically, $H'$ is obtained
from $H$ by dropping the spins with the weakest couplings. This implies that any algorithm which
determines $E_0(H')$ to sufficient accuracy also gives an approximation to
$E_0(H)$ up to extensive accuracy. In particular, by suitably choosing
$n'$, this allows us to show the following (here $O^\ast$ denotes scaling up to a prefactor polynomial in~$n$): 
\begin{theorem}\label{thm:energyestimate} 
    Let $\varepsilon>0$. 
    There exists a quantum algorithm computing an $\hat E$
    satisfying $\abs{\hat E-E_0(H)}\le \varepsilon\mathcalE$ with
    high probability in time 
    \begin{equation}\label{eq:th}
    {O^\ast}\left(2^{\left(1-\frac{\varepsilon}{k+\varepsilon}\right)n/2}\,\varepsilon^{-1}\right)\ .
    \end{equation}
    There also exists a classical algorithm for this task with runtime
    ${O^\ast}\left(2^{\left(1-\frac{\varepsilon}{k+\varepsilon}\right)n}\, \varepsilon^{-1}\right)$.
\end{theorem}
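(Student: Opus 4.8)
The plan is to reduce the $n$-spin problem to a ground-state-energy estimation on a smaller Hamiltonian $H'$ acting on only $n'<n$ spins, and then run a Grover-type energy estimator on $H'$, so that the exponential cost scales as $2^{n'/2}$ rather than $2^{n/2}$. Writing $n'' := n-n'$ for the number of discarded spins, the strategy rests on two estimates whose errors are balanced against the budget $\varepsilon\M$: the approximation error $\abs{E_0(H)-E_0(H')}$ incurred by dropping spins, and the precision to which $E_0(H')$ is subsequently estimated.

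For the reduction (the paper's first result) I would quantify the coupling carried by each spin $i$ by $w_i := \sum_{\alpha\,:\,i\in\mathrm{supp}(h_\alpha)}\norm{h_\alpha}$. Since every term is at most $k$-local, $\sum_i w_i = \sum_\alpha \abs{\mathrm{supp}(h_\alpha)}\,\norm{h_\alpha}\le k\M$, so by averaging the $n''$ spins of smallest weight together carry total weight at most $\tfrac{n''}{n}\,k\M$. Let $H'$ be $H$ with all terms touching those $n''$ spins removed; by subadditivity of the operator norm $\norm{H-H'}\le\sum_{\alpha\text{ removed}}\norm{h_\alpha}\le\tfrac{n''}{n}\,k\M$, and hence $\abs{E_0(H)-E_0(H')}\le\tfrac{n''}{n}\,k\M=:\delta\M$. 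Choosing $n''$ of order $\tfrac{\varepsilon}{k}\,n$ makes $\delta$ a controlled multiple of $\varepsilon$ while leaving $n'=\bigl(1-\tfrac{\varepsilon}{k+\varepsilon}\bigr)n$ spins. Equivalently, the ground space of $H'$ tensored with the full Hilbert space of the $n''$ dropped spins furnishes a $2^{n''}$-dimensional subspace of $H$-energy at most $E_0(H)+O(\delta\M)$, the exponential-dimensional low-energy subspace underlying the approach.

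Next I would estimate $E_0(H')$ by the Grover/amplitude-amplification-based ground-energy algorithm of the cited works, applied to the $n'$-qubit Hamiltonian $H'$. This returns an estimate to additive precision $\eta$ in time $O^\ast(2^{n'/2}\,\M/\eta)$; taking $\eta$ of order $\varepsilon\M$ yields the factor $\varepsilon^{-1}$, and substituting $n'=(1-\tfrac{\varepsilon}{k+\varepsilon})n$ reproduces the exponent in \eqref{eq:th}. By the triangle inequality the output $\hat E$ then satisfies $\abs{\hat E-E_0(H)}\le\eta+\delta\M\le\varepsilon\M$ once $n''$ and $\eta$ are tuned so that the two contributions jointly meet the budget. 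The classical statement follows identically, replacing the Grover estimator by exact diagonalization (Lanczos) of the $n'$-spin Hamiltonian, which costs $O^\ast(2^{n'})$.

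I expect the main obstacle to be the joint tuning of the number $n''$ of dropped spins against the estimation precision $\eta$ so that the resulting exponent is exactly $1-\tfrac{\varepsilon}{k+\varepsilon}$ while the prefactor stays $O^\ast(\varepsilon^{-1})$: the dropping error grows linearly in $n''$ whereas the qubit saving is also linear in $n''$, so the two must be optimized simultaneously with the precision. A secondary point to handle carefully is the overcounting in the averaging bound (a removed term may touch several dropped spins, which only helps the estimate) and the verification that the cited Grover energy estimator indeed attains the claimed $2^{n'/2}$ scaling on the reduced, still $k$-local Hamiltonian $H'$.
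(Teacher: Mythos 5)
Your proposal follows essentially the same route as the paper: weight each spin by $w_i=\sum_{\alpha:\,i\in\mathrm{supp}(h_\alpha)}\norm{h_\alpha}$, use $\sum_i w_i\le k\M$ to argue that the lightest $n''$ spins carry little coupling, drop every term touching them to get an $(n-n'')$-qubit Hamiltonian $H'$ with $\abs{E_0(H)-E_0(H')}\le\delta\M$, and then run the guided ground-energy estimator (quantumly, with the maximally mixed state as guiding state of overlap $2^{-(n-n'')/2}$) or Lanczos (classically) on $H'$. Two remarks. First, your averaging bound --- the $n''$ smallest weights sum to at most $\tfrac{n''}{n}k\M$ --- is actually slightly tighter than the paper's, which instead bounds $e(r)\le k\M/(n-r)$ and hence the discarded weight by $\tfrac{r}{n-r}k\M$; it is this weaker bound that produces the formula $r=\lfloor\delta n/(k+\delta)\rfloor$ and the exponent $1-\varepsilon/(k+\varepsilon)$, so your version would if anything permit a marginally better exponent. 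Second, the ``joint tuning'' you flag as the main obstacle is a real gap as written: if the estimation precision is $\eta=c\varepsilon\M$ for a constant $c$, then the dropping budget shrinks to $(1-c)\varepsilon\M$ and one cannot simultaneously reach the exponent $1-\varepsilon/(k+\varepsilon)$ and keep the prefactor at $\varepsilon^{-1}$ (matching the exponent forces $c\le\varepsilon/(k+\varepsilon)$, i.e.\ $\eta=\Theta(\varepsilon^2\M)$ and a $\varepsilon^{-2}$ prefactor). The paper closes this by an asymmetric split: essentially the entire budget, $\delta=(1-\tfrac1n)\varepsilon$, goes to dropping spins, and only $\tfrac{\varepsilon}{n}\M$ is left for estimating $E_0(H')$; the resulting factor of $n/\varepsilon$ in the estimator's cost is polynomial in $n$ and absorbed into $O^\ast$, which is exactly why the stated runtime retains only $\varepsilon^{-1}$. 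With that allocation (and the block-encoding of the still $k$-local, sparse $H'$, which you correctly note must be checked), your argument matches the paper's proof.
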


In fact, we are able to show a stronger connection: Given a low-energy
state of $H'$, we can construct/prepare (in a systematic and efficient way)
low-energy states of $H$. This has two consequences. First, it gives us an
algorithm to prepare a low-energy state of $H$ in sub-Grover time, by
first preparing a low-energy state of $H'$ and then transforming it to a
low-energy state of $H$. Specifically, we show:
\begin{theorem}\label{thm:stateprep} 
    Let $\varepsilon>0$. 
    There exists a quantum algorithm preparing, with high probability,
    a mixed state $\rho$ with energy $\Tr[H\rho]\le E_0(H) +\varepsilon\mathcalE$  in time 
    \begin{equation}\label{eq:th_alg}
	O^\ast\left(2^{\left(1-\frac{\varepsilon}{2k+\varepsilon}\right)n/2}\, \varepsilon^{-1}\right).
    \end{equation}
\end{theorem}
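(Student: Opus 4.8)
The plan is to combine the reduction $H\to H'$ described above with a Grover-type preparation routine run on the smaller system, exactly as sketched in the paragraph preceding the statement. Concretely, I would (i) pass to the Hamiltonian $H'$ obtained by discarding the $r$ qubits carrying the weakest couplings, so that $H'$ lives on $n'=n-r$ qubits and $\abs{E_0(H)-E_0(H')}\le\delta\mathcalE$; (ii) prepare on these $n'$ qubits a state $\rho'$ with $\Tr[H'\rho']\le E_0(H')+\eta\mathcalE$; and (iii) output $\rho=\rho'\otimes\sigma$, where $\sigma$ is an arbitrary (e.g.\ maximally mixed) state on the discarded qubits. Because every discarded interaction term has been removed from $H'$, the operator $H-H'$ (viewed on the full system) is the sum of the dropped terms and has norm at most $\delta\mathcalE$, so tensoring with junk on the discarded qubits can change the energy by at most $\delta\mathcalE$. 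This is the precise sense in which a low-energy state of $H'$ is ``transformed'' into a low-energy state of $H$.

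For step (ii) I would use amplitude amplification to project onto the low-energy eigenspace of $H'$. Starting from the maximally mixed state on the $n'$ kept qubits (prepared as half of a maximally entangled pair so the routine can be run coherently), its overlap with the eigenspace of $H'$ of energy at most $E_0(H')+\eta\mathcalE$ is at least $2^{-n'}$, since that eigenspace has dimension at least one. The reflection about this eigenspace is implemented by phase estimation of $e^{iH'\tau}$ to energy resolution $\Theta(\eta\mathcalE)$, which costs $O(1/\eta)=O(1/\varepsilon)$ Hamiltonian-simulation steps (each $\poly(n)$, absorbed into $O^\ast$) and requires a threshold, obtained by first running the energy-estimation algorithm of \Cref{thm:energyestimate}. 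Amplitude amplification then reaches the target eigenspace in $O(2^{n'/2})$ rounds, for a total cost $O^\ast(2^{n'/2}\varepsilon^{-1})$.

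The runtime exponent then follows by balancing the error budget. Dropping the $r$ weakest qubits costs energy $\delta\mathcalE$ with $\delta=rk/(n-r)=rk/n'$, and the prepared state obeys $\Tr[H\rho]\le E_0(H')+\eta\mathcalE+\delta\mathcalE\le E_0(H)+(2\delta+\eta)\mathcalE$, where one factor of $\delta\mathcalE$ is the shift $E_0(H')-E_0(H)$ from the reduction and the other is the energy of the dropped terms in $\rho$. Imposing $2\delta+\eta\le\varepsilon$ and taking $\eta=\Theta(\varepsilon)$ small forces $\delta\to\varepsilon/2$, whence $r=\delta n/(k+\delta)\to \varepsilon n/(2k+\varepsilon)$ and $n'=(1-\tfrac{\varepsilon}{2k+\varepsilon})n$. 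This is exactly the mechanism by which the $k+\varepsilon$ denominator of the energy-estimation bound \eqref{eq:th}, where the full budget $\delta=\varepsilon$ is available, becomes the $2k+\varepsilon$ of \eqref{eq:th_alg}: here half of the budget must be reserved for the dropped-term energy in the prepared state. Keeping $\eta=\Theta(\varepsilon)$ preserves the explicit $\varepsilon^{-1}$ prefactor, and substituting $n'$ into the preparation cost gives \eqref{eq:th_alg}.

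The main obstacle I expect is making the amplitude-amplification step fully rigorous with an imperfect reflection: phase estimation gives only a ``soft'' energy threshold, so a little weight leaks onto eigenstates slightly above the cutoff, and the initial overlap, though bounded below by $2^{-n'}$, is not known exactly. I would handle the leakage by placing the phase-estimation cutoff slightly below $E_0(H)+\varepsilon\mathcalE$ and bounding the stray energy contribution within the same budget, and handle the unknown overlap either by fixed-point amplitude amplification or by exploiting the guaranteed lower bound $2^{-n'}$ (which is the exponential-low-energy-subspace statement, since the valid junk completions on the discarded qubits span a space of dimension $2^r$) to fix the number of rounds. Verifying that these approximations remain within the $(2\delta+\eta)\mathcalE$ budget, and that running \Cref{thm:energyestimate} to fix the threshold does not dominate the cost, is the technical heart of the argument.
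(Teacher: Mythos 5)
Your proposal follows essentially the same route as the paper: pass to $\Hc$ on $n-r$ qubits with $\delta\approx\varepsilon/2$, start from the maximally mixed state whose overlap with the low-energy eigenspace is at least $2^{-(n-r)}$, amplitude-amplify an energy filter of width $\Theta(\varepsilon\mathcalE/n)$ anchored at an estimate of $E_0(\Hc)$, and tensor the result with an arbitrary state on the discarded qubits, for total cost $O^\ast\bigl(2^{(n-r)/2}\varepsilon^{-1}\bigr)$; the paper implements the filter as a QSVT polynomial projector rather than via phase estimation, which is a cosmetic difference. The one quantitative wrinkle is your choice $\eta=\Theta(\varepsilon)$: with a fixed constant fraction reserved for the filter width, $\delta$ stays bounded away from $\varepsilon/2$ and the exponent lands strictly above $1-\varepsilon/(2k+\varepsilon)$, so you must let $\eta$ vanish relative to $\varepsilon$ (the paper takes $\eta=\varepsilon/n$ and $\delta=(1-\tfrac1n)\varepsilon/2$), which is free since the resulting $1/\eta$ overhead remains polynomial and is absorbed into $O^\ast$.
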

As a second consequence, the connection between the low-energy space of
$H'$ and $H$ allows us to construct an entire low-energy subspace whose
dimension grows exponentially with energy. Specifically, we find that
for any $\varepsilon$, the number $\mathcal C(H,\varepsilon)$ of
eigenstates of $H$ with energy below $E_0(H)+\varepsilon\mathcalE$ scales as
\begin{equation}\label{th:bound-on-C}
\mathcal C(H,\varepsilon)\ge 2^{\left\lfloor\varepsilon n/(2k+\varepsilon)\right\rfloor}\ .
\end{equation}
While this scaling is expected for systems with spatially local
interactions, it is remarkable that this also holds for systems with
all-to-all interactions: It implies that even by adding arbitrary
all-to-all interactions to a spatially local Hamiltonian, one cannot raise
the energy of too many states by a too large amount. One other important
consequence  
of the bound \eqref{th:bound-on-C} is that a quantum algorithm
with a performance similar to that of Theorem~\ref{thm:energyestimate} can then also
be obtained by adapting known algorithms based on Grover search, by
exploiting that the time Grover search needs to succeed decreases for a
space which has more solutions. 

Finally, on interaction graphs of maximum degree $t$, the bounds of \Cref{thm:energyestimate}, \Cref{thm:stateprep}, and \Cref{th:bound-on-C} can in general be improved by replacing each occurence of $k$ with $(1-1/t)k$.


\emph{Construction of $H'$.---}%
We start by constructing the new Hamiltonian $H'$. We will assume
w.l.o.g.\ that all $h_\alpha$ act on exactly $k$ qubits~\footnote{Interactions
acting on less qubits can be ``padded'' by adding further qubits on which
they act trivially.}.
For each site $s=1,\dots,n$, define 
$\mathcal I(s):=\{\alpha\,\vert\, h_\alpha\mbox{\ acts on $s$}\}$ 
as the set of all interactions $\alpha\in\{1,\dots,m\}$ which act on $s$.
Further, define
\begin{equation}
e(s):=\sum_{\alpha \in \mathcal I(s)} \|h_\alpha\|\ ;
\end{equation}
it quantifies the amount of energy $e(s)$ in $H$ which originates in 
interactions which involve $s$. Importantly, 
\begin{equation}
\label{eq:sum-es-is-kE}
\sum_{s=1}^n e(s) = \sum_{\alpha=1}^m k\,\|h_\alpha\|
=k\mathcalE\ .
\end{equation}
Since the labeling $s=1,\dots,n$ of sites is arbitrary, we 
choose to label them such that $e(s)$ is monotonously increasing, 
$e(1)\le e(2)\le \cdots \le e(n)$.

Now pick some $r\in\{1,\dots,n\}$, and split the sites into the intervals 
$R:=\{1,\dots,r\}$ and $\bar R := \{r+1,\dots,n\}$. The idea is to construct
$H'$ by only keeping interactions $h_\alpha$ which lie entirely within
$\bar R$. 
Specifically, with $T$ the set of all interactions $\alpha$ which only involve sites in $\bar R$, we define
\begin{equation}
H':=\sum_{\alpha\in T}h_\alpha\ .
\end{equation}
$H'$ can either be regarded as a Hamiltonian acting on all $n$ sites, or
as acting only on the sites in $\bar R$, in which case we denote it by $\Hc$,
that is, $H'\equiv \openone^{\phantom\prime}_{\!R}\otimes \Hc$.

How well does $H'$ approximate $H$? Let $\ket\phi$ be an arbitrary state.
Then,
\begin{equation}
\label{eq:HHprime-difference-step1}
\left\vert\bra\phi H-H'\ket\phi\right\vert 
\le
\sum_{\alpha\notin T}
\left\vert\bra\phi h_\alpha \ket\phi\right\vert 
\le \sum_{\alpha\notin T} \|h_\alpha\|\ .
\end{equation}
To bound the rightmost term, note first that
\begin{equation}
\label{eq:HHprime-difference-step2}
\sum_{\alpha\notin T} \|h_\alpha\|
\le \sum_{s=1}^r e(s)\ ,
\end{equation}
as the right hand side contains each $\|h_\alpha\|$ for $\alpha\notin T$
\emph{at least} once. (On interaction graphs of maximum degree $t$, the right hand side of \Cref{eq:HHprime-difference-step2} can in general be improved by a multiplicative factor of $1-1/t$, which replaces $k$ with $(1-1/t)k$ in all subsequent runtimes stated in this paper; see Appendix for details.) Since $e(s)$ is monotonously increasing, 
\begin{equation*}
(n-r)\, e(r)\le \sum_{s=r+1}^n e(s) 
\le \sum_{s=1}^n e(s) \stackrel{\eqref{eq:sum-es-is-kE}}{=} 
k\mathcalE\ ,
\end{equation*}
which yields $e(r)\le k\mathcalE/(n-r)$. This allows us to bound 
\begin{equation*}
\sum_{s=1}^r e(s) \le r\,e(r)
\le \frac{r}{n-r} k\mathcalE\ ,
\end{equation*}
which together with 
\eqref{eq:HHprime-difference-step1}, \eqref{eq:HHprime-difference-step2}
yields the bound
\begin{equation*}
\left\vert\bra\phi H-H'\ket\phi\right\vert 
\le \frac{r}{n-r} k\mathcalE\ .
\end{equation*}
For a given $\delta>0$ \footnote{While any value of $\delta>0$ can be chosen, when $\delta$ is very small we obtain $r=0$, which does not give any speedup (since we are left with the original Hamiltonian).}, we
can thus obtain the bound
\begin{equation}
\label{eq:HHprime-difference-epsilon}
\big\vert\bra\phi H-H'\ket\phi\big\vert 
\le \delta\mathcalE
\end{equation}
on how much $H$ and $H'$ can differ
by choosing
\begin{equation}
\label{eq:HHprime-difference-r}
r = \left\lfloor \frac{\delta\,n}{k+\delta}\right\rfloor\ .
\end{equation}

From  \eqref{eq:HHprime-difference-epsilon}, we immediately obtain that
the ground state energies $E_0(H)$ and $E_0(H')$ are close: To this end,
let $\ket{\Phi'}$ be a ground state of $H'$.
Since $\bra{\Phi'} H \ket{\Phi'} - \bra{\Phi'} H'\ket{\Phi'} \le
\delta\mathcalE$, we have
\begin{equation}
\label{eq:H-Hprime-energybound-oneway}
\bra{\Phi'} H \ket{\Phi'} 
\le
\bra{\Phi'} H'\ket{\Phi'} 
+\delta\mathcalE
=
E_0(H')
+\delta\mathcalE\ .
\end{equation}
Then,
$E_0(H)\le  \bra{\Phi'} H \ket{\Phi'}  \le E_0(H') +\delta\,\mathcalE$,
and correspondingly the other way around, which yields
\begin{equation}
\label{eq:E-Eprime-difference}
\big\vert E_0(H)-E_0(H')\big\vert \le \delta\, \mathcalE\ .
\end{equation}

\emph{Construction of low-energy states and density of states.---}%
As we have observed, we can consider $H'$ as a Hamiltonian $\Hc$ 
defined on the $n-r$ qubits in $\bar R$. Let 
$\ket\varphi_{\bar R}$ be a ground state of $\Hc$. 
For an arbitrary
state $\ket\chi_R$ on the $r$ qubits in $R$, define
$\ket{\Phi''}:=\ket\chi_R\otimes \ket\varphi_{\bar R}$. Then, 
\begin{equation}
\bra{\Phi''} H'\ket{\Phi''}
=\bra\varphi \Hc \ket\varphi=E_0(H')\ ,
\end{equation}
i.e., $\ket{\Phi''}$ is a ground state of $H'$
(and as such satisfies  \eqref{eq:H-Hprime-energybound-oneway}), and thus
\begin{equation}
\bra{\Phi''}H\ket{\Phi''} 
\stackrel{\eqref{eq:H-Hprime-energybound-oneway}}{\le}
E_0(H')+\delta\mathcalE
\stackrel{\eqref{eq:E-Eprime-difference}}{\le}
E_0(H)+2\delta\mathcalE\ .
\end{equation}
Note that given $\ket{\varphi}_{\bar R}$, $\ket{\Phi''}$ can be efficiently
constructed, since all we have to do is to tensor it with an arbitrary state
$\ket{\chi}_R$ defined on the qubits in $R$.

We just saw that there is a $2^r$-dimensional space of states $V$---spanned by
all $\ket{\chi}_R\otimes \ket{\varphi}_{\bar R}$ with arbitrary
$\ket\chi_R$---with energy at most $2\delta \mathcalE$ above $E_0(H)$.
The min-max theorem (see, e.g.,~\cite{Horn+85}) then implies that 
the $2^r$ smallest eigenvalues of $H$ are all upper bounded by
$E_0(H)+2\delta\mathcalE$.
By replacing $\delta$ with $\varepsilon/2$ in
\eqref{eq:HHprime-difference-r}, we get the claimed bound
\begin{equation}
\label{eq:dos-bound-intext}
\mathcal C(H,\varepsilon)\ge 2^r= 2^{\lfloor\varepsilon n/(2k+\varepsilon)\rfloor}
\end{equation}
on the number of eigenstates with energy at most
$E_0(H)+\varepsilon\mathcalE$~\footnote{Note that Cubitt and
Gonz\'{a}lez-Guill\'{e}n~\cite{gonzalez-guillenHistorystateHamiltoniansAre2018}
previously gave exponential lower bounds on the low-energy spaces of local
Hamiltonians, but only for \emph{history state} Hamiltonians (such as
Kitaev's circuit-to-Hamiltonian
construction~\cite{kitaevClassicalQuantumComputation2002}). Our
result, in contrast, applies to any $k$-local Hamiltonian.}.

\emph{Estimating the ground energy.---}%
The bound (\ref{eq:dos-bound-intext}) can be used to improve the complexities of known ground energy estimation algorithms (e.g., the algorithm by Lin and Tong \cite{Lin+20}), by exploiting that the runtime of quantum amplitude amplification with a target space of dimension $\mathcal C(H,\varepsilon)$ improves with its inverse square root~\cite{Brassard2002}. However, we will give a simpler and more direct quantum algorithm, which moreover achieves (slightly) better performance.

The idea is simple: Recall we wish to estimate $E_0(H)$ within error $\varepsilon\mathcalE$. We will use that $E_0(H')=E_0(\Hc)$ and apply the algorithm from \cite{Lin+20} to $\Hc$, which acts on $n-r$ qubits, to obtain an estimate $\hat E'$ of $E_0(H')$. 
This in turn gives a good estimate of $E_0(H)$ by \eqref{eq:E-Eprime-difference}. 
The computational speedup comes from working on a smaller system: The runtime of the algorithm now depends on $n-r$ rather than $n$.

We now work out the details of the algorithm. 
In \eqref{eq:HHprime-difference-epsilon} and \eqref{eq:HHprime-difference-r}, we choose $\delta = (1-\frac{1}{n}) \varepsilon$.
We then compute an estimate $\hat E'$ of 
$E_0(\Hc)=E_0(H')$  such that 
\begin{equation}\label{eq3}
	\abs{\hat E'-E_0(H')}\le \frac{\varepsilon}{n} \,\mathcalE\ .
\end{equation}
This can be done in 
\begin{equation}\label{eq:qtime}
O^\ast\left(2^{(n-r)/2} \left(\frac{\varepsilon}{n}M\right)^{-1}\right)
\stackrel{\eqref{eq:HHprime-difference-r}}{=}
O^\ast\left(2^{\left(1-\frac{\varepsilon}{k+\varepsilon}\right)\frac{n}{2}}\varepsilon^{-1}\right)
\end{equation}
time by applying the quantum algorithm from \cite{Lin+20} using the completely mixed state (which has overlap $({\Tr}\left[\frac{I}{2^{n-r}}\ket{\phi}\bra{\phi}\right])^{1/2}=2^{-(n-r)/2}$ 
with the ground state of $\Hc$ and can easily be prepared) as initial state.  
Correspondingly, we can also compute $\hat E'$ on a classical computer using the Lanczos method \cite{Lanczos50,KW92} in time 
\begin{equation}\label{eq:ctime}
O^\ast\left(2^{(n-r)} \left(\frac{\varepsilon}{n}\right)^{-1}\right)
\stackrel{\eqref{eq:HHprime-difference-r}}{=}
O^\ast\left(2^{\left(1-\frac{\varepsilon}{k+\varepsilon}\right)n}\varepsilon^{-1}\right).
\end{equation}
The returned estimate will with high probability satisfy (\ref{eq3}), which together with 
\eqref{eq:E-Eprime-difference} (recall that $\delta=(1-\frac{1}{n}) \varepsilon$) guarantees that  
$\abs{\hat E'-E_0(H)}
\le \varepsilon\mathcalE$
holds, as desired.


Note that in order to apply the algorithm from \cite{Lin+20} we need to construct a block-encoding of $H'$. Since $H'$ is $k$-local, and thus sparse, 
a block-encoding can be implemented efficiently using the techniques from \cite{Gilyen+STOC19}.

\emph{Preparing a low-energy state.---}%
We now discuss how to prepare a low-energy state of $H$.
We will achieve this by first preparing a low-energy state of the Hamiltonian
$\Hc$ acting on $n-r$ qubits. As we have seen, $\Hc$ trivially embeds
into the Hamiltonian on the full $n$ qubits,
$H'\equiv\openone^{\phantom'}_R\otimes \Hc$. This embedding thus 
yields a low-energy state  for
$H'$, and thus by \eqref{eq:HHprime-difference-epsilon} also for $H$. 

To start, choose $\delta=(1-\frac{1}{n})\frac{\varepsilon}{2}$ in
(\ref{eq:HHprime-difference-epsilon},\ref{eq:HHprime-difference-r}).
Let 
\[
\Hc=\sum_{i=1}^{2^{n-r}}
\lambda_i\ket{\phi_i}\bra{\phi_i}
\]
be the eigenvalue decomposition of $\Hc$.
Using the Quantum Singular Value Transformation (QSVT) \cite{Gilyen+STOC19,Martyn+21}, we can implement in $O^*\left(\log(1/\chi)\,\varepsilon^{-1}\right)$ time an operator
\[
\Pi=\sum_{i=1}^{2^{n-r}}
P(\lambda_i)\ket{\phi_i}\bra{\phi_i},
\]
where $P\colon [-\mathcalE,\mathcalE] \to [0,1]$ is a polynomial such that 
\begin{align}
P(\lambda_i)&\ge 1-\chi&&\textrm{ if }  \lambda_i\le  E_0(H')\ ,\label{eq1}\\
P(\lambda_i)&\le\chi&&\textrm{ if }  \lambda_i\ge  E_0(H') + \frac{\varepsilon}{n}\,\mathcalE\label{eq2}\ .
\end{align}
The operator $\Pi$ is (to high accuracy) a projection onto the low-energy space of $\Hc$.
For simplicity, in the analysis below we will assume $\chi=0$. 
The QSVT framework from \cite{Gilyen+STOC19,Martyn+21} does not directly implement $\Pi$. Instead, it implements a block-encoding of $\Pi$,
i.e., a larger unitary matrix $U$ such that $\Pi$ appears (possibly after normalization)  
as the top left block of $U$:
\[
U=
\left(
\begin{matrix}
	\Pi&\cdot\\
	\cdot&\cdot
\end{matrix}
\right).
\]
The key property of this block-encoding is as follows: For any state $\sigma$ on $n-r$ qubits, applying $U$ on  $\sigma
\otimes \left(\ket{0}\bra{0}\right)^{\otimes \ell}$, where $\ell$ denotes the number of ancillas used by $U$, and subsequently measuring the ancillas in the computational basis, returns---with probability $\Tr[\Pi\sigma]$---the state 
\begin{equation}\label{eq-prepare}
\frac{\Pi\sigma\Pi }{\Tr[\Pi\sigma]}
\end{equation}
on the first $n-r$ qubits, i.e., it successfully applies $\Pi$ on $\sigma$ with probability $\Tr[\Pi\sigma]$. Due to \eqref{eq:E-Eprime-difference} and \eqref{eq2}, the state of \eqref{eq-prepare} has low energy: 
\begin{equation}\label{eq4}
\Tr\left[\Hc\left(\frac{\Pi\sigma\Pi }{\Tr[\Pi\sigma]}\right)\right]
\le
E_0(H)+\left(\delta+\frac{\varepsilon}{n}\right)\,\mathcalE\ .
\end{equation}

In order to prepare a low-energy state of $H$, we take $\sigma$ as the completely mixed state $\sigma_0$ on $n-r$ qubits,
i.e.\ $\sigma_0=\tfrac{1}{2^{n-r}}\sum \ket{\phi_i}\bra{\phi_i}$;
note that this state can be prepared efficiently.
As above, we apply the unitary $U$ and measure the ancilla qubits. The probability that we obtain 
the state \eqref{eq-prepare}  is
\begin{equation}
\label{eq_prob}
\Tr[\Pi\sigma_0]=\sum_{i=1}^{2^{n-r}}\frac{\Tr\left(\Pi\ket{\phi_i}\bra{\phi_i}\right)}{2^{n-r}}\ge
\frac{1}{2^{n-r}}\ , 
\end{equation}
where the inequality follows from \eqref{eq1}, since there exists at least
one $\ket{\phi_i}$ for which $\lambda_i=  E_0(H')$. Finally, we embed this state
in the full $n$ qubits by adding $r$ qubits each initialized to
$\ket{0}\bra{0}$:
\begin{equation}\label{eqn:rho}
\rho:=\left(\left(\ket{0}\bra{0}\right)^{\otimes r}\right)_{\! R}\otimes 
\left(\frac{\Pi\sigma_0\Pi }{\Tr[\Pi\sigma_0]}\right)_{\!\bar R}.
\end{equation}
This state satisfies Theorem \ref{thm:stateprep},
\begin{equation}\label{eqn:energybound}
\Tr[H\rho]
\stackrel{\eqref{eq:HHprime-difference-epsilon}}{\le}
\Tr[H'\rho]+\delta\mathcalE
\stackrel{\eqref{eq4}}{\le}
E_0(H) +\varepsilon\mathcalE\ 
\end{equation}
(recall that $\delta=(1-\frac{1}{n})\frac{\varepsilon}{2}$).

Using quantum amplitude amplification \cite{Brassard2002}, the probability \eqref{eq_prob} can be amplified to a probability arbitrarily close to 1 using $O(2^{(n-r)/2})$ calls to $U$. Since the unitary $U$ can be implemented in $O^\ast(\varepsilon^{-1})$ time, the overall time complexity is
\begin{equation}\label{eq:qtimeprep}
O^\ast\left(2^{(n-r)/2}\,\varepsilon^{-1}\right)\stackrel{\eqref{eq:HHprime-difference-r}}{=}
O^\ast\left(2^{\left(1-\frac{\varepsilon}{2k+\varepsilon}\right)\frac{n}{2}}\,\varepsilon^{-1}\right).
\end{equation}

\emph{Concluding remarks.---}%
In this paper we gave a quantum algorithm which provides a
super-quadratic speedup for computing properties of a general Hamiltonian
with all-to-all interactions and $k$-body terms.
The main contribution of our work is arguably not the magnitude of runtime improvement itself, but rather the following two points: First, and perhaps most surprisingly, that the standard square root speedup attainable by unstructured Grover search can be beaten, without assumptions on the local Hamiltonian itself. Second, that our approach is not heuristic, but yields rigorous and worst-case runtime guarantees. In both regards, our result can be interpreted as a quantum version of the (classical) breakthrough result by Hirsch \cite{Hirsch03}, which broke the natural bound based on brute-force search for hard classical optimization problems.

Hirsch's algorithm selects a random assignment and tries to improve it by repeatedly flipping a value of a variable chosen randomly from an unsatisfied clause (again chosen randomly). The classical analogue of our approach for computing approximately optimal solutions is similar, and slightly simpler than Hirsch: we repeatedly select a random assignment --- the bound on the running time then follows from our bound on the dimension of the low energy space. As further discussed in the Appendix, the complexity of this approach is similar to the complexity of Hirsch's algorithm.
As Hirsch's work has been influential to the development of rigorous classical approximation algorithms \cite{Escoffier+14,Alman+SODA20,Drucker20,KorhonenFOCS22,Esmer+SODA24}, we expect that our discoveries will initiate research on approximation algorithms for properties of general Hamiltonians. 

In addition, the Gap Exponential Time Hypothesis (GAP-ETH)~\cite{Dinur2016,manurangsi2017} states that there exist constants $c,\varepsilon$ such that approximating ground state energies for $3$-local Hamiltonians within error $\pm \epsilon M$ is classically impossible in $O(2^{cn})$ time. It is thus reasonable to posit a \emph{quantum} GAP-ETH, which is identical except the no-go is for quantum algorithms in $O(2^{cn/2})$ time. In this sense, one would not expect a major improvement over our Theorem ~\ref{thm:energyestimate}. 

An open question is whether our approach can be used to produce a guiding state for the ground state, to be input into the algorithms of~\cite{Ge+19,Lin+20}. On the one hand, if one believes the Quantum Strong Exponential Time Hypothesis (QSETH)~\cite{buhrman_et_al:LIPIcs.STACS.2021.19}, which states that Boolean satisfiability problems cannot be solved with superquadratic speedup over brute force search for large $k$, even by a quantum computer, then producing such a guiding state with superquadratic speedup is impossible. On the other hand, our runtime is best for small $k$, so it is not clear whether QSETH should pose an obstruction here. With this said, however, we show in the Appendix that for any $K\geq \epsilon M$, the low-energy state we produce in (\ref{eqn:rho}) has $1/K$ overlap with the space of energy at most $E_0(H) +K$. By choosing $K=(1+1/p(n))\varepsilon M$ for a sufficiently large polynomial $p$, one effectively recovers the energy bound of \Cref{eqn:energybound}, while maintaining that $\rho$ has inverse polynomial overlap with the low energy space. 

\begin{acknowledgments}
\emph{Acknowledgments.---}%
We thank Toby Cubitt for helpful discussions. 
Part of this work was done when visiting the Simons Institute for the Theory of Computing. 
SG is supported by DFG grants 432788384 and 450041824, BMBF project PhoQuant (13N16103), and project PhoQC from the State of Northrhine Westphalia.
ZL is supported by the U.S. Department of Energy, Office of Science, National Quantum Information Science Research Centers, Quantum Systems Accelerator, and by NSF Grant CCF 2311733.
FLG is supported by JSPS KAKENHI grant Nos.~JP20H05966, 24H00071 and MEXT Q-LEAP grant No.~JPMXS0120319794. 
NS is  supported by the Austrian Science Fund FWF (Grant DOIs \href{https://doi.org/10.55776/COE1}{10.55776/COE1}, \href{https://doi.org/10.55776/P36305}{10.55776/P36305}, and \href{https://doi.org/10.55776/F71}{10.55776/F71}),  the European Union -- NextGenerationEU, and the European Union’s Horizon 2020 research and innovation programme through Grant No.\ 863476 (ERC-CoG \mbox{SEQUAM}). 
ST is  supported by JSPS KAKENHI grant Nos.~JP20H05961, JP20H05967, JP22K11909.
\end{acknowledgments}
\providecommand{\noopsort}[1]{}
%

\appendix
\newcommand{\ER}{E(R)}
\newcommand{\ERb}{E'(R)}
\newcommand{\Rb}{\overline{R}}
\newcommand{\eg}[2]{E_{#1,#2}}
\newcommand{\ag}[2]{A_{#1,#2}}
\newcommand{\hg}[1]{H_{#1,#1}}
\newcommand{\rg}[2]{R_{#1,#2}}

\appendix
\renewcommand{\appendixpagename}{Appendix}
\appendixpage
In this Appendix, we first show how to improve all runtimes and key bounds in our paper by taking the maximum degree $\tmax$ of the input Hamiltonian's interaction hypergraph into account. The improvements apply to \Cref{eq:HHprime-difference-r} (bound on $r$), \Cref{eq:dos-bound-intext} (bound on dimension of low energy space), \Cref{eq:qtime} (quantum runtime for low energy estimation), \Cref{eq:ctime} (classical runtime for low energy estimation), and \Cref{eq:qtimeprep} (quantum runtime for preparation of low energy state). Our approach is to strengthen \Cref{eq:HHprime-difference-step2}, which we do in Section A. Section B then shows how an improved \Cref{eq:HHprime-difference-step2} in turn improves all bounds mentioned above. 

Second, we show in Section C that the state produced in \Cref{eqn:rho} constitutes a guiding state for the low energy space of $H$.

Finally, in Section D we discuss the relation between our approach and Hirsch's work.

\section{A. Improving Equation (\ref{eq:HHprime-difference-step2})}

\noindent\emph{Definitions.} Define a $k$-hypergraph $G=(V,E)$ as a hypergraph with all hyperedges of size exactly $k$. By assumption, $H$'s interaction hypergraph is thus a $k$-hypergraph. For any vertex $v\in V$, its \emph{degree} $\deg(v)$ is the number of hyperedges containing $v$. Two vertices $v,w\in V$ are neighbors if there exists $e\in E$ with $u,v\in e$. Note that the number of neighbors for any $v$ is at most $\deg(v)(k-1)$. For any $v,w\in V$, the \emph{distance} $d(u,v)$ between $v$ and $w$ is the minimum number $m$ of distinct hyperedges $(e_1,\ldots, e_m)$ such that $v\in e_1$, $w\in e_m$, and for all $i\in [m-1]$, $e_i$ and $e_{i+1}$ share at least one common vertex. Define $l_i(v)$ as the set of vertices at distance less than or equal to $i$ from $v$. Observing that $l_1(v)$ is just the neighbor set of $v$, we thus have
\begin{equation}\label{eqn:dist1}
    \abs{l_1(v)}\leq \deg(v)(k-1).
\end{equation}
    Finally, for any $R\subseteq V$, let $E(R)$ denote the subhypergraph of $G$ induced by $R$. In words, these are the hyperedges internal to $R$.

\begin{lemma} \label{l:strongerlemma}
    Consider any $k$-local Hamiltonian with unit weight edges and maximum degree $\tmax$. Among all vertices of degree $t$, let $v$ have minimal neighbor set size, i.e., minimizing $\abs{l_1(v)}$. Then, for any $r\geq \abs{l_1(v)}+1$, 
    \begin{equation}\label{eq:improved}
    \sum_{\alpha\notin T} \|h_\alpha\|
    \le \frac{1}{1+\beta(k-1)}\sum_{s=1}^r e(s),
    \end{equation}
    where
    \begin{align}
        \beta:=\frac{1}{(t-1)(k-1)}\cdot \frac{\tmax + \lfloor\gamma\rfloor}{\tmax+\gamma}\label{eqn:beta}
    \end{align} 
    with $\gamma:=\frac{r-\abs{l_1(v)}-1}{k-1}$. Recall by \Cref{eqn:dist1} that $\abs{l_1(v)}\leq \deg(v)(k-1)$. 
\end{lemma}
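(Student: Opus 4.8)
The plan is to reduce the claimed multiplicative improvement to a purely combinatorial lower bound on the number of hyperedges induced inside $R$, and then to extract that lower bound from the local structure around the chosen vertex $v$. First, since all weights are unit, I would rewrite both sides as incidence counts. Writing $N:=\sum_{\alpha\notin T}\|h_\alpha\|$ for the number of hyperedges meeting $R$, the handshake identity gives $\sum_{s=1}^r e(s)=\sum_{s\in R}\deg(s)=\sum_{\alpha\notin T}|e_\alpha\cap R|$. Splitting the edges meeting $R$ into those induced by $R$ (contributing $k$ feet each) and those crossing to $\bar R$ (contributing at least one foot each) yields the clean inequality $\sum_{s=1}^r e(s)\ge N+(k-1)\,|E(R)|$. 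Consequently, the target bound $\sum_{\alpha\notin T}\|h_\alpha\|\le\frac{1}{1+\beta(k-1)}\sum_{s=1}^r e(s)$ is implied by the single estimate $|E(R)|\ge\beta N$, and this is the reformulation I would work with.

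It then remains to show that $R$ induces at least $\beta N$ edges. The role of $v$ is to pin down the worst case: the fewest induced edges per boundary edge occur for a connected cluster, and among the maximum-degree vertices $v$ is the one that forces such a cluster to spread out the slowest. I would argue by growing $R$ outward from the closed neighborhood $\{v\}\cup l_1(v)$, which already contains $|l_1(v)|+1$ vertices. Each further hyperedge that must be included in order to reach a total of $r$ vertices contributes at most $k-1$ new vertices, so after the first $|l_1(v)|+1$ vertices, reaching $r$ vertices forces at least $\lfloor\gamma\rfloor$ additional induced edges, with $\gamma=\frac{r-|l_1(v)|-1}{k-1}$; this is exactly where the floor and the quantity $\gamma$ enter the final expression for $\beta$.

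To convert this induced-edge count into the factor $\beta$, I would bound the number of boundary (crossing) edges against the induced edges using the degree cap $t$. Each vertex of $R$ carries at most $t$ incident edges, and incidences already used by induced edges reduce the budget available for crossing edges, so that each induced edge can be charged at most $(t-1)(k-1)$ crossing edges through its vertices and their remaining incidences. Combining the growth lower bound on $|E(R)|$ with this upper bound on $N$ should produce $|E(R)|\ge\beta N$ with the stated $\beta$, the ratio $\frac{t+\lfloor\gamma\rfloor}{t+\gamma}$ appearing as the rounding correction between the integer edge count $t+\lfloor\gamma\rfloor$ and its continuous proxy $t+\gamma$.

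The main obstacle is the middle step: rigorously justifying that the extremal (least-induced-edges) configuration is a connected cluster seeded at $v$, and carrying out the accounting so that the floor/rounding factor comes out \emph{exactly}, rather than only in the asymptotic form $1-1/t$. The choice of $v$ as the minimum-neighborhood maximum-degree vertex, together with the hypothesis $r\ge|l_1(v)|+1$ guaranteeing that we have at least reached its closed neighborhood, are precisely the ingredients designed to make this extremal analysis tight; verifying that they suffice in the genuine hypergraph setting $k>2$, where hyperedges may overlap in complicated ways and the vertex counting is no longer that of ordinary graphs, is the delicate part.
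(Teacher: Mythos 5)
Your opening reduction is correct and matches the paper's final accounting: with $N=\sum_{\alpha\notin T}\|h_\alpha\|=|E(R)|+|E'(R)|$ and $\sum_{s=1}^r e(s)\ge N+(k-1)|E(R)|$, the claim indeed follows from the single estimate $|E(R)|\ge\beta N$. Your quantitative ingredients are also the ones the paper uses: the numerator $t+\lfloor\gamma\rfloor$ comes from the fact that the closed neighborhood of $v$ already contributes $t$ induced hyperedges and each further induced hyperedge costs at most $k-1$ new vertices, while the denominator comes from charging each newly added vertex at most $t-1$ new crossing hyperedges (the hyperedge through which a vertex is first reached was already counted when an earlier vertex of that hyperedge entered $R$). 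Up to this point you have essentially reconstructed the paper's auxiliary lemma bounding $|E(R)|/|E'(R)|$.

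The genuine gap is in what you yourself flag as the ``main obstacle.'' You propose to prove the bound for the set $R$ inherited from the main text (the $r$ sites of smallest $e(s)$, or an arbitrary $R$ of size $r$) by arguing that the extremal, least-induced-edges configuration is a connected cluster seeded at $v$. That extremal claim is false, not merely delicate: if the $r$ chosen sites happen to be pairwise non-adjacent (e.g., $r=3$ scattered vertices of a long cycle with $k=t=2$, where $\beta=1$), then $|E(R)|=0$ while $N>0$, and $|E(R)|\ge\beta N$ fails for any $\beta>0$. No refinement of the extremal analysis can rescue this. The paper does not attempt it; instead it \emph{redefines} $R$ to be the output of an explicit modified breadth-first search started at $v$ --- add whole levels $l_i(v)$ while they fit within the budget $r$, then greedily add minimal subsets $S\subseteq l_i(v)$ of size at most $k-1$ that each create at least one new internal hyperedge --- and proves the ratio bound only for that constructed $R$. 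In other words, the lemma is a statement about a specific, algorithmically chosen $R$, not about every $R$ of size $r$. Once you make that move, your growth and charging arguments go through essentially verbatim, the hypothesis $r\ge|l_1(v)|+1$ serves only to guarantee the construction gets past the seed neighborhood, and the obstacle you identify disappears.
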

\noindent Before proving \Cref{l:strongerlemma}, let us digest it and discuss tightness. The simplest setting is $k=2$, yielding $\beta=1/(t-1)$, and simplifying \Cref{eq:improved} to
\begin{align}
    \sum_{\alpha\notin T} \|h_\alpha\|
    \le \left(1-\frac{1}{t}\right)\sum_{s=1}^r e(s),\label{eqn:ideal}
\end{align}
For $k>2$, ``morally'' the lower bound for $\beta$ should be
\begin{align}
    \beta\geq\frac{1}{(t-1)(k-1)},\label{eqn:close}
\end{align}
and this holds for any $r$ such that $r-\abs{l_1(v)}-1$ is divisible by $k-1$, which recovers \Cref{eqn:ideal}. If $r-\abs{l_1(v)}-1$ is not divisible by $k-1$, on the other hand, $\beta$ can slightly drop below the bound in \Cref{eqn:close}, but the loss is vanishing in $r$. In other words, since $k\in O(1)$, when $t\in o(r)$ (which is the case, for example, when $t\in O(1)$ and $r\in\Theta(n)$, the latter of which is generally the case for our algorithm), then as $r\rightarrow \infty$, we recover \Cref{eqn:close}. Put another way, up to adding a constant to $r$, we may assume $r-\abs{l_1(v)}-1$ is indeed divisible by $k-1$, and thus \Cref{eqn:close} holds.

Moving on to tightness of \Cref{l:strongerlemma}, first, the floor function in \Cref{eqn:beta} cannot be dropped, as it stems from our particular proof approach. It is, in particular, easy to construct examples for carefully chosen $r$ so that for our construction, the version of \Cref{eqn:beta} without the floor is violated. Second, the setting of $\beta$ in \Cref{eqn:beta} is for a worst case analysis --- in many hypergraphs, larger values of $\beta$ can be attained. For example, for $2$-local Hamiltonians on $t$-regular graphs with a cycle of some length $c$, setting $r=c$ yields $\beta=1/(t-2)$ in our analysis. For $3$-regular interaction graphs, for example, this improves \Cref{eqn:ideal} to
\begin{align}
    \sum_{\alpha\notin T} \|h_\alpha\|
    \le \frac{1}{2}\sum_{s=1}^r e(s),
\end{align}
which is tight. (More generally, an identical statement holds if for a given $r$, one can efficiently find a set of cycles in $G$ whose union has exactly $r$ vertices.) Similar statements can be derived by instead considering paths of length $r$ in $G$, for $r$ larger than a constant.

\begin{proof}[Proof of \Cref{l:strongerlemma}]
    \noindent In the previous construction, we ordered vertices according to non-decreasing $e(s)$, and set $R=\{1,\ldots, r\}$. To show the improved bound, observe that in unit weight graphs, the sum over $\alpha\not\in T$ counts the number of hyperedges either contained within $R$, or crossing from $R$ to its complement. If all hyperedges fall into the latter case, the equality $\sum_{\alpha\notin T} \|h_\alpha\|
    = \sum_{s=1}^r e(s)$ holds, whereas at the other extreme, if all hyperedges are contained in $R$, the equality $\sum_{\alpha\notin T} \|h_\alpha\|
    = \frac{1}{k}\sum_{s=1}^r e(s)$ holds. We show that by picking $R$ greedily to maximize edges interal to $R$, we can always achieve a multiplicative factor of ${1}/(1+\beta(k-1))$.

    \emph{To choose $R$:}  
    We construct $R$ as follows (intuition to follow):
    \begin{enumerate}
        \item Set $R=\{v\}$ and $i=1$.
        \item While $l_i(v)\leq r$ do //Add full levels while possible
            \begin{enumerate} 
                \item Set $R=R\cup l_i(v)$.
                \item Set $r=r-\abs{l_i(v)}$.
                \item Set $i=i+1$.
            \end{enumerate}
            \item Repeat: //Greedily choose vertices from $l_i(v)$
            \begin{enumerate}
                \item Let $S\subseteq l_i(v)\setminus R$ be of minimal size so that $\abs{E(R\cup S)}>\abs{E(R)}$ and $\abs{S}\leq r$.
                \item If no such $S$ exists then
                \begin{enumerate}
                    \item Add any $r$ vertices from $l_i(v)\setminus R$ to $R$.
                    \item Exit loop.
                \end{enumerate}
                \item Set $R=R\cup S$.
                \item Set $r=r-\abs{S}$.
            \end{enumerate} 
   \end{enumerate} 
    In words, we conduct a modified breadth-first search on $G$ starting at $v$. For as long as we can, we add all vertices on each level to $R$ (loop on line 2). Once $\abs{l_i(v)}<r$, however, we need to be more careful (loop on line $3$). We repeatedly choose the smallest subset $S$, so that each $S$ added to $R$ increases the number of hyperedges internal to $R$ by at least $1$. Note line 3(a) can be done efficiently, since the maximum size of any $S$ to be considered is at most $k-1$ for $k\in O(1)$. This is because, by definition of breadth-first search, for any $i>1$, we can visit a node $w$ at level $i$ via hyperedge $e$ if and only if we visited a neighbor $u\in e$ of $w$ in level $i-1$. This implies \emph{all} vertices of $e$ must be contained in levels $i-1$ and $i$. By line (2), all vertices of $e$ in level $i-1$ have been added to $R$. Thus, there are at most $k-1$ vertices of $e$ in level $i$, whose addition to $R$ adds $e$ to $\ER$.

\begin{lemma}\label{l:ER}
    Let $\ERb$ denote the set of edges with at least one vertex in each of $R$ and $\overline{R}$. Then, for any $r\geq 1 + \abs{l_1(v)}$,
    \begin{equation}\label{eqn:erbound}
        \frac{\abs{\ER}}{\abs{\ERb}}\geq \frac{\tmax + \lfloor\frac{r-\abs{l_1(v)}-1}{k-1}\rfloor}{(\tmax-1)(\tmax(k-1)+r-\abs{l_1(v)}-1)}.
    \end{equation}
\end{lemma}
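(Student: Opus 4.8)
Writing $a:=\abs{l_1(v)}$ and $\gamma:=\tfrac{r-a-1}{k-1}$, the plan is to bound the two factors of \Cref{eqn:erbound} independently: establish a lower bound $\abs{\ER}\ge t+\floor{\gamma}$ on the internal edges and an upper bound $\abs{\ERb}\le(t-1)(t(k-1)+r-a-1)$ on the crossing edges, and then simply divide. I use throughout the layered picture of the construction: $R$ is the disjoint union of the root $v$, a number of complete BFS levels, and a partial final level, with $\abs{R}=r$, and every hyperedge spans at most two consecutive levels.

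For the internal bound, note first that once $\{v\}\cup l_1(v)\subseteq R$ (which holds since $r\ge a+1$), all $\deg(v)=t$ hyperedges through $v$ are internal, as each of their vertices is a neighbor of $v$; this already contributes $t$. Beyond this base of $1+a$ vertices the construction spends $r-a-1=(k-1)\gamma$ further vertices. The key structural fact, established just before the lemma, is that each newly internal edge is completed by adding at most $k-1$ fresh vertices (its vertices in the previous, already-included level are free). Decomposing the post-base additions into such completions, I would argue that at least $\floor{(r-a-1)/(k-1)}=\floor{\gamma}$ new internal edges are formed; the floor is exactly what absorbs a trailing batch of fewer than $k-1$ vertices that may complete no edge (the fallback branch, line 3(b)). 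This gives $\abs{\ER}\ge t+\floor{\gamma}$.

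The crux is the crossing bound, and the right idea is a careful charging scheme: charge each crossing edge $e$ to the vertex of $e\cap R$ of \emph{minimum} BFS level. I then claim every vertex is charged at most $t-1$ times, and $v$ not at all. Indeed, any $w\ne v$ at level $i$ is reached by a BFS-parent hyperedge $e_0$ containing a vertex at level $i-1$; since all lower levels are complete, that vertex lies in $R$, so the minimum $R$-level of $e_0$ is below $i$ and $e_0$ is \emph{not} charged to $w$. Every edge actually charged to $w$ has all its $R$-vertices at level $\ge i$ and thus (edges spanning two consecutive levels) lies above $w$, so there are at most $\deg(w)-1\le t-1$ of them. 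Crucially this is insensitive to whether $w$ was added by a successful greedy step or by the fallback branch, which is what makes it work on the partial level. Summing, $\abs{\ERb}\le(t-1)(r-1)$, and since $a\le t(k-1)$ by \Cref{eqn:dist1} we have $r-1\le t(k-1)+r-a-1$, yielding the stated denominator.

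Combining the two bounds gives $\abs{\ER}/\abs{\ERb}\ge (t+\floor{\gamma})/[(t-1)(t(k-1)+r-a-1)]$, which is \Cref{eqn:erbound}. The step I expect to require the most care is the min-level charging for the crossing edges: one must verify that the reaching edge is genuinely excluded for \emph{every} non-root vertex --- including fallback vertices on the partial level, which need have no internal incident edge at all --- so that the per-vertex charge never exceeds $t-1$. This is precisely where a naive ``one free internal edge per vertex'' count breaks down, and where the floor in the numerator and the slack $t(k-1)-a$ in the denominator have to be reconciled.
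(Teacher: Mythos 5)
Your proof is correct and takes essentially the same route as the paper's: the numerator bound ($t$ base edges through $v$, plus at least one new internal hyperedge per $k-1$ post-base vertices, with the floor absorbing the wasted fallback remainder of size at most $k-2$) is identical, and your min-level charging scheme for the crossing edges is a global repackaging of the paper's induction, in which each vertex added after $v$ contributes at most $t-1$ new crossing edges because its BFS-parent hyperedge already contains an earlier $R$-vertex at a strictly lower (and hence complete) level. The only cosmetic difference is that you first derive the slightly cleaner bound $\abs{\ERb}\le (t-1)(r-1)$ and then relax it via $\abs{l_1(v)}\le t(k-1)$, whereas the paper bounds the base contribution by $t(k-1)(t-1)$ and the increments by $(t-1)(r-\abs{l_1(v)}-1)$ separately.
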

\begin{proof}[Proof of \Cref{l:ER}]
    We proceed in two steps: (1) The first $1+\abs{l_1(v)}$ vertices added to $R$ by construction satisfy a certain ``base'' ratio. (2) Each additional vertex added to $R$ thereafter roughly preserves the base ratio. 
    
    For (1), the first $1+\abs{l_1(v)}$ vertices added to $R$ are $v$ and all its neighbors. At this point, $\abs{E(R)}=t$ since $v$ has degree $t$. And $\abs{\ERb}\leq t(k-1)(t-1)$, since by \Cref{eqn:dist1}, $v$ has at most $t(k-1)$ neighbors, each of which has at most $t-1$ neighbors distinct from $v$. Thus, at this point we have base ratio
    \begin{align}
        \frac{\abs{\ER}}{\abs{\ERb}}\geq\frac{t}{t(k-1)(t-1)}.
    \end{align}
    For (2), recall from above that any vertex $w$ at level $i$ visited via hyperedge $e$ has all its neighbors in levels $i-1$ and $i$. By line 2, all neighbors in level $i-1$ are already in $R$. From this, if the condition of line 2 holds, then line 2(a) adds all vertices in level $i$ to $R$, and this includes the at most $k-1$ remaining neighbors of $w$ (including $w$), thus adding $w$ to $\ER$. The case of line 3 is similar, but additionally uses the fact that the greedy heuristic of line 3(a) guarantees that, each time we pick a subset $S$ of at most $k-1$ vertices at level $i$, at least one new hyperedge $e$ is added to $\ER$. We conclude in either case that, after the first $1+\abs{l_1(v)}$ vertices are added to $R$, it takes at most $k-1$ new vertices being added to $R$ in order to add a new hyperedge to $\ER$. This yields the additive term $\lfloor\frac{r-\abs{l_1(v)}-1}{k-1}\rfloor$ in the numerator of \Cref{eqn:erbound}. 
    
    As for the denominator, we proceed by induction. For the base case, all hyperedges containing $v$ are in $\ER$ by construction. Thus, for any vertex $w$ in $l_1(v)$ added to $R$, we add at most $t-1$ new hyperedges to $\ERb$, since at least one hyperedge contains both $v$ and $w$, which is in $\ER$. For the inductive step, consider any vertex $w$ added after the first $1+\abs{l_1(v)}$ vertices were added to $R$. Again, the addition of $w$ to $R$ adds at most $t-1$ new hyperedges to $\ERb$, but for a slightly different reason. Recall that in order to visit $w$ on level $i$ via hyperedge $e$, the breadth-first search must first have visited a neighbor $u\in e$ of $w$ on level $i-1$. But by induction, $e$ was already added to $\ERb$ when $u$ was added to $R$. This yields the additive term $(t-1)(r-\abs{l_1(v)-1})$ in the denominator of \Cref{eqn:erbound}, completing the proof.
\end{proof}
With \Cref{l:ER} in hand, we can prove \Cref{eq:improved}. We have
    \begin{align}
        \sum_{\alpha\notin T} \|h_\alpha\| &= \abs{\ER}+\abs{\ERb}=:m.\label{eq:first}\\
        \sum_{s=1}^r e(s) &= k\abs{\ER}+\abs{\ERb}.
    \end{align}
    It suffices to consider only the boundary case of \Cref{l:ER} with 
    \begin{align}
        \frac{\abs{\ER}}{\abs{\ERb}}= \frac{\tmax + \lfloor\frac{r-\abs{l_1(v)}-1}{k-1}\rfloor}{(\tmax-1)(\tmax(k-1)+r-\abs{l_1(v)}-1)}=\beta,
    \end{align} 
    as $\sum_{s=1}^r e(s)$ increases monotonically with increasing $\ER/\ERb$. Then:
    \begin{align}
        \sum_{s=1}^r e(s) &= 
        \beta k m+(1-\beta)m\\
        \label{eq:last}   &= (1+\beta(k-1))m.
    \end{align}
    Combining \Cref{eq:first} with \Cref{eq:last} yields \Cref{eq:improved}. 
\end{proof}

\section{B. Applications of an improved Equation (\ref{eq:HHprime-difference-step2})}\label{app:application}

Defining $\Delta:=\frac{1}{1+\beta(k-1)}$ in \Cref{eq:improved}, we first obtain an improved bound for \Cref{eq:HHprime-difference-r}:
\begin{equation}
    r = \left\lfloor \frac{\delta\,n}{\Delta k+\delta}\right\rfloor\ .
\end{equation}
In turn, this improves \Cref{eq:dos-bound-intext} to
\begin{equation}
    \mathcal C(H,\varepsilon)\ge 2^{\lfloor\varepsilon n/(2\Delta k+\varepsilon)\rfloor}, 
    \end{equation}
    \Cref{eq:qtime} to
\begin{equation}
    O^\ast\left(2^{\left(1-\frac{\varepsilon}{\Delta k+\varepsilon}\right)\frac{n}{2}}\varepsilon^{-1}\right),
\end{equation}
and analogously for \Cref{eq:ctime}. Finally, \Cref{eq:qtimeprep} is improved to 
\begin{equation}
    O^\ast\left(2^{\left(1-\frac{\varepsilon}{2\Delta k+\varepsilon}\right)\frac{n}{2}}\,\varepsilon^{-1}\right).
\end{equation}
Thus, for example, when $k=2$, or when $k>3$ with any $r$ such that $r-\abs{l_1(v)}-1$ is divisible by $k-1$, we have for $t=3$-regular hypergraphs that
\begin{equation}
    r = \left\lfloor \frac{\delta\,n}{\frac{2}{3}k+\delta}\right\rfloor\ ,
\end{equation}
and thus \Cref{eq:dos-bound-intext} becomes
\begin{equation}
    \mathcal C(H,\varepsilon)\ge 2^{\lfloor\varepsilon n/(\frac{4}{3}k+\varepsilon)\rfloor}, 
    \end{equation}
    Finally, \Cref{eq:qtime} is improved to
\begin{equation}
    O^\ast\left(2^{\left(1-\frac{\varepsilon}{\frac{2}{3}k+\varepsilon}\right)\frac{n}{2}}\varepsilon^{-1}\right),
\end{equation}
and similarly \Cref{eq:qtimeprep} to 
\begin{equation}
    O^\ast\left(2^{\left(1-\frac{\varepsilon}{\frac{4}{3}k+\varepsilon}\right)\frac{n}{2}}\,\varepsilon^{-1}\right).
\end{equation}

\section{C. Producing low energy guiding states}

We next show that the state produced in \Cref{eqn:rho} constitutes a guiding state for the low energy space of $H$, in the following formal sense.
\begin{lemma}\label{l:guiding}
    Let $\rho$ be the $n$-qubit state in \Cref{eqn:rho}, such that $\Tr(\rho H)\leq E_0(H)+\varepsilon\mathcalE$. For any $K\geq \epsilon M$, define $\Pi_K$ as the projector onto the span of eigenvectors of $H$ with eigenvalue less than or equal to $\E_0(H)+ K$. Then, $\Tr(\rho\Pi_K)\geq 1-\frac{\epsilon M}{K}$.
\end{lemma}
\begin{proof}
    Defining $\Pi_L:= I-\Pi_{K}$, 
    \begin{align}
        \Tr(\rho H) &= \Tr((\Pi_K + \Pi_L)\rho(\Pi_K+\Pi_L)H)\\
        &=\Tr(\Pi_K\rho\Pi_K H)+ \Tr(\Pi_L\rho\Pi_L H)\\
        &\geq E_0(H)+K\Tr(\rho\Pi_L),\label{eqn:lower}
    \end{align}
    where the second statement follows since $\Pi_K$ and $\Pi_L$ partition the eigenvectors of $H$ by definition, and thus $\Pi_K H \Pi_L=0$, and the third since $\Pi_L H\Pi_L \succeq K I$. Combining \Cref{eqn:lower} with the assumption $\Tr(\rho H)\leq E_0(H)+\varepsilon\mathcalE$ yields the claim.
\end{proof}
To apply \Cref{l:guiding}, let $p$ be an arbitrary polynomial with $p(n)\geq 1$, and set $K=(1+\frac{1}{p(n)-1})\epsilon M$. By \Cref{l:guiding}, 
\begin{align}\label{eqn:overlap}
    \Tr(\rho\Pi_K)\geq 
    \frac{1}{p(n)}, 
\end{align}
i.e. $\rho$ has at least inverse polynomial overlap onto eigenvectors of $H$ of energy at most 
\begin{align}\label{eqn:loss}
    E_0(H)+K = E_0(H)+\epsilon M + \frac{\epsilon M}{p(n)-1}.
\end{align} 
Setting $p(n)\gg \epsilon M$, the term $\epsilon M/(p(n)-1)$ in \Cref{eqn:loss} asymptotically vanishes, at the cost of polynomially scaling down the overlap in \Cref{eqn:overlap}. The latter, however, is not a concern, as the goal of a guiding state is often to extract a state supported solely on the low energy space, i.e. on $\Pi_K$ in our case. To extract such a state, for example, one can apply Quantum Phase Estimation to $\rho$, whose success probability scales as $\sim 1/p(n)$. Thus, $O(p(n))$ copies of $\rho$ suffice to project successfully onto the space $\Pi_K$ with constant probability. This additional $p(n)$ does not alter our runtime bound in \Cref{eq:qtimeprep}, as it disappears into the $O^*$ notation used therein.

\section{D. Relation with Hirsch's work}

Let $f$ be an instance of $k$-SAT with $n$ variables and $m$ clauses. For an assignment $z\in\{0,1\}^n$,  write $\#(f,z)$ the number of clauses in $f$ satisfied by $z$, and write 
\[
m^\ast=\max_{z\in\{0,1\}^n}\#(f,z)\,.
\]
Hirsch's algorithm \cite{Hirsch03} finds an assignment $z$ such that
\begin{equation}\label{eq:hirsch}
\#(f,z)\ge m^\ast-\varepsilon 
m
\end{equation}
by
selecting a random assignment and trying to improve it by repeatedly flipping a value of a variable chosen randomly from an unsatisfied clause (again chosen randomly). Its complexity (see \cite[Theorem 1]{Hirsch03}) is 
\[
O^\ast\left(\left(2-\frac{2\varepsilon}{\varepsilon+k+\varepsilon k}\right)^n\right)= O^\ast\left(2^{\left(1+\log_2\left(1-\frac{\varepsilon}{\varepsilon+k+\varepsilon k}\right)\right)n}\right)\,.
\]
Since $1/\ln(2)>1.44$,
when $\varepsilon$ is small enough this bound becomes
 \[
 O^\ast\left(2^{\left(1-\frac{1.44\varepsilon}{\varepsilon+k+\varepsilon k}\right)n}\right)\,.
 \]

Our approach gives an even simpler classical algorithm for computing approximately optimal solutions, based on sampling, with similar complexity. Indeed, Theorem~\ref{thm:energyestimate} immediately gives a classical (randomized) algorithm working in time
\[
O^\ast\left(2^{\left(1-\frac{\varepsilon}{\varepsilon+k}\right) n}\right)
\]
that finds with high probability an assignment $z\in\{0,1\}^n$ satisfying \Cref{eq:hirsch}. 

\end{document}